\let\MYcaption\@makecaption
\let\@makecaption\MYcaption
\newcommand{\By}{\mathbb{B}\kern -0.1em\raisebox{0.5ex}{\scalebox{0.7}{$\mathbb{Y}$}}\xspace}
\renewcommand{\epsilon}{\ensuremath\varepsilon}
\renewcommand{\phi}{\ensuremath{\varphi}}
\theoremstyle{plain}
\numberwithin{equation}{section}
\newtheorem{theorem}{Theorem}[section]
\newtheorem{lemma}[theorem]{Lemma}
\newcommand{\xclaim}{\textsc{Xclaim}\xspace}
\newcommand{\zclaim}{\textsc{Zclaim}\xspace}
\newcommand\zcashvar[1]{\ensuremath{\mathsf{#1}}\xspace}
\newcommand\zclaimstate[1]{\ensuremath{\mathsf{#1}}\xspace}
\newcommand\zcashdef[1]{\ensuremath{\mathbf{#1}}\xspace}
\newcommand\mathvar[1]{\ensuremath{\mathit{#1}}\xspace}
\newcommand\dlt[1]{\ensuremath{\Delta_{#1}}\xspace}
\newcommand{\dvf}{\zcashvar{d}}
\newcommand{\pkd}{\zcashvar{pk_d}}
\newcommand{\dpa}{\ensuremath{(\dvf,\, \pkd)}\xspace}
\newcommand{\rcm}{\zcashvar{rcm}}
\newcommand{\val}{\zcashvar{v}}
\newcommand{\vtot}{\zcashvar{v_{tot}}}
\newcommand{\vmax}{\zcashvar{v_{max}}}
\newcommand{\nlock}{\zcashvar{n_{permit}}}
\newcommand{\sstd}{\ensuremath{\sigma_{std}}\xspace}
\newcommand{\n}{\zcashdef{n}}
\newcommand{\statevr}{\zclaimstate{VaultRegistered}}
\newcommand{\stateni}{\zclaimstate{NotIssuing}}
\newcommand{\statenr}{\zclaimstate{NotRedeeming}}
\newcommand{\statevai}{\zclaimstate{IssueStart}}
\newcommand{\stateam}{\zclaimstate{AwaitingMint}}
\newcommand{\stateaic}{\zclaimstate{Await\-Issue\-Confirm}}
\newcommand{\stateic}{\zclaimstate{IssueChallenged}}
\newcommand{\stateis}{\zclaimstate{IssueSuccess}}
\newcommand{\statevar}{\zclaimstate{RedeemStart}}
\newcommand{\statearc}{\zclaimstate{AwaitRedeem\-Confirm}}
\newcommand{\staterc}{\zclaimstate{RedeemChallenged}}
\newcommand{\staters}{\zclaimstate{RedeemSuccess}}
\newcommand{\statess}{\zclaimstate{StartingState}}
\newcommand{\stateos}{\zclaimstate{OutputState}}
\newcommand{\requestLockop}{\mathvar{requestLock}}
\newcommand{\lockop}{\mathvar{lock}}
\newcommand{\mintop}{\mathvar{mint}}
\newcommand{\burnop}{\mathvar{burn}}
\newcommand{\challengeIssueop}{\mathvar{challengeIssue}}
\newcommand{\challengeRedeemop}{\mathvar{challengeRedeem}}
\newcommand{\releaseop}{\mathvar{release}}
\newcommand{\confirmIssueop}{\mathvar{confirmIssue}}
\newcommand{\confirmRedeemop}{\mathvar{confirmRedeem}}
\newcommand{\submitPOBop}{\mathvar{submitPOB}}
\newcommand{\submitPOCop}{\mathvar{submitPOC}}
\newcommand{\submitPOIop}{\mathvar{submitPOI}}
\newcommand{\lock}{\zclaimstate{lock}}
\newcommand{\mint}{\zclaimstate{mint}}
\newcommand{\burn}{\zclaimstate{burn}}
\newcommand{\release}{\zclaimstate{release}}
\newcommand{\dm}{\dlt{\mathvar{mint}}}
\newcommand{\dci}{\dlt{\mathvar{confirmIssue}}}
\newcommand{\dcr}{\dlt{\mathvar{confirmRedeem}}}
\newcommand{\iw}{\ensuremath{i_w}\xspace}
\newcommand{\issuer}{\textsc{issuer}\xspace}
\newcommand{\redeemer}{\textsc{redeemer}\xspace}
\newcommand{\vault}{\textsc{vault}\xspace}
\newcommand{\oxr}{\ensuremath{\mathcal{O}_{\mathsf{xr}}}\xspace}
\newcommand{\linebreakand}{%
  \end{@IEEEauthorhalign}
  \hfill\mbox{}\par
  \mbox{}\hfill\begin{@IEEEauthorhalign}
}
\begin{document}

\title{Bridging Sapling: Private Cross-Chain Transfers}
\author{
    \IEEEauthorblockN{Aleixo Sanchez\IEEEauthorrefmark{1}\IEEEauthorrefmark{2}\\
        \href{mailto:aleixo@web3.foundation}{aleixo@web3.foundation}}
        \IEEEauthorblockA{\\[\baselineskip]
        \IEEEauthorrefmark{1}\textit{ETH Z\"urich}\\
        Z\"urich, Switzerland}
    \and
    \IEEEauthorblockN{Alistair Stewart\IEEEauthorrefmark{2}\\
        \href{mailto:stewart.al@gmail.com}{stewart.al@gmail.com}}
    \IEEEauthorblockA{\\[\baselineskip]
        \IEEEauthorrefmark{2}\textit{Web3 Foundation}\\
        Zug, Switzerland}
    \and
    \IEEEauthorblockN{Fatemeh Shirazi\IEEEauthorrefmark{3}\\
        \href{mailto:fatemeh@heliax.dev}{fatemeh@heliax.dev}}
    \IEEEauthorblockA{\\[\baselineskip]
        \IEEEauthorrefmark{3}\textit{Heliax}\\
        Zug, Switzerland}
}
% \IEEEpubid{\makebox[\columnwidth]{978-1-6654-9538-7/22/\$31.00~\copyright2022 IEEE \hfill} \hspace{\columnsep}\makebox[\columnwidth]{ }}

\maketitle
% \IEEEpubidadjcol

\begin{abstract}

Interoperability is one of the main challenges of blockchain technologies, which are generally designed as self-contained systems. Interoperability schemes for privacy-focused blockchains are particularly hard to design: they must integrate with the unique privacy features of the underlying blockchain so as to prove statements about specific transactions in protocols designed to obfuscate them. This has led to users being forced to weaken their privacy, e.g. by using centralised exchanges, to move assets from one chain to another.
We present \zclaim, a framework for trustless cross-chain asset migration based on the Zcash privacy-protecting protocol. \zclaim integrates with an implementation of the Sapling version of Zcash on a smart-contract capable issuing chain in order to attain private cross-chain transfers. We show that a tokenised representation can be created via a set of collateralised intermediaries without relying on or revealing the total amount to any third party.
\end{abstract}

\begin{IEEEkeywords}
Privacy, Cross-Chain, Interoperability, Non-interactive Zero-Knowledge Proof, Blockchain
\end{IEEEkeywords}

\section{Introduction}
\label{sec:intro}
Since the creation of Bitcoin in 2009, thousands of cryptocurrencies and blockchains with a wide array of applications have emerged.
This fragmentation of the blockchain space has created a market for interoperability solutions~\cite{schulte2019towards,belchior2021survey}, which we loosely define here as the ability to verify a transaction in a network other than the one in which it was created.
Unfortunately, the vast majority of blockchains are siloed, not having been designed with interoperability in mind.
This results in significant complexity required to design cross-chain protocols; hence, this challenge has until recently mostly been circumvented through the use of centralised exchanges.
These require trust and undermine anonymity, which is particularly undesirable in the case of privacy-oriented cryptocurrencies designed around the notion of payment anonymity.
Despite a mounting interest in recent years in the deployment of privacy features in blockchain protocols, in particular of various zero-knowledge proof primitives~\cite{zhang2019security}, little attention has been given to maintaining the privacy achieved in this manner across chains.
That means users must reveal, or at least risk revealing, their identity if they wish to move assets from one privacy-preserving system to another.
This acts as a major limitation in the usability of these systems, as one can only durably rely on the privacy guarantees of a given system by remaining confined to it.

To address this, in this work we introduce \zclaim \footnote{The \zclaim protocol is the contribution of a master's thesis submitted to ETH Z\"urich in collaboration with Web3 Foundation~\cite{sanchez2020confidential}.
The thesis is accessible under \url{https://github.com/alxs/zclaim} and contains a more exhaustive and technical specification of the protocol.}, a protocol enabling private transactions across chains.
This is to the authors' knowledge the first scheme that succeeds in maintaining privacy in cross-chain transfers. Our protocol follows the structure of \xclaim~\cite{zamyatin2019xclaim}, a framework achieving collateralised trustless asset migration by leveraging smart contract logic, a dynamic set of economically incentivised, trustless intermediaries and cross-chain state verification.
This approach to interoperability falls under what is colloquially referred to as the ``wrapping'' of assets.

\zclaim builds upon said framework by integrating with the Zcash~\cite{hopwood2016zcash} protocol in such a way as to maintain anonymity.
On the chain on which the wrapped assets are issued, the \emph{issuing chain}, we assume an implementation of Sapling according to its specification~\cite{hopwood2016zcash} and introduce new transfer types, with accompanying zk-SNARKs~\cite{banerjee2020demystifying}, to facilitate interoperability.
These transfers integrate with Zcash's private or \emph{shielded} payment scheme and enable the issuing and redeeming of value.
Furthermore, we discuss the custom logic necessary to carry out the issue and redeem protocols, and present and analyse a strategy to hide the transferred amount from intermediaries.

The end result is a protocol enabling trustless cross-chain transfers with similar privacy guarantees to those of Zcash itself, which is considered to provide some of the strongest in blockchain~\cite{zhang2019security,feng2019survey}.

The remainder of the paper is organized as follows.
\cref{sec:background} studies related work and provides a brief summary of concepts in the Zcash protocol.
\cref{sec:protocol} introduces a high-level description of the \zclaim protocol.
\cref{sec:analysis} discusses risks and attacks, and \cref{sec:privacy-inference} presents and analyses an approach to maintain privacy against intermediaries.
\cref{sec:future} considers current limitations and future work and, finally, \cref{sec:conclusion} concludes the paper.

\section{Background}
\label{sec:background}

\subsection{Related Work}
\label{sec:relatedwork}

Atomic swaps~\cite{herlihy2018accs}, the traditional approach to decentralised cross-chain exchanges, allow users on two blockchains to swap ownership of a pre-agreed amount of assets, guaranteeing that the exchange either happens in full or not at all. Atomic swaps present limitations such as 1) requiring the establishing of an external communication channel to find and agree on a swap, 2) an asymmetrical advantage for one out of the two participants (``free option problem''~\cite{Lightnin81:online}), and 3) relatively long confirmation delays. 
There are ongoing efforts to implement atomic swaps between the privacy coin Monero~\cite{van2013cryptonote} and Bitcoin~\cite{BTCXMRatomicswaps,CCSMoneroAtomicSwapsimplementationfunding} as well as Ethereum~\cite{smoothie}.

An alternative approach are \emph{asset migration} protocols~\cite{zamyatin2019sok} such as the one presented in this paper, in which a representation of an asset (its \emph{tokenised representation} or ``wrapped'' version) is created on a different chain, while those on the original chain are locked until the process is reversed.

At least two different projects---Wrapped~\cite{Wrapped}, a ``provider of wrapped layer-1 assets'', and the Ren Project, a generic cross-chain transfer protocol~\cite{HomerenprojectrenWikiGitHub}---offer tokenised representations of Zcash on Ethereum~\cite{Zcash76:online}.
However, the former relies on a centralised authority and neither of them supports Zcash's shielded payment scheme.
The Zcash Foundation announced~\cite{pegzone_announ} their intent to work on a ``Zcash pegzone''~\cite{githubPegzone} for the Cosmos~\cite{cosmosWhitepaper} ecosystem in 2020, with the goal of enabling ``shielded transfers from the pegzone to Zcash and vice versa''; however, details on the project are scarce.

\subsection{Zcash and Sapling}

We now introduce basic concepts from Zcash used in this work.
For a formal definition of these concepts and the cryptographic schemes employed in them, we refer the reader to the Sapling version of the Zcash protocol specification~\cite{hopwood2016zcash}.
Some terms have been simplified for ease of understanding.

Zcash is an implementation of the Zerocash~\cite{sasson2014zerocash,sasson2014zerocash_ext} payment scheme.
It builds on Bitcoin's~\cite{nakamoto2008bitcoin} transparent payment scheme, adding a private payment scheme leveraging zk-SNARKs~\cite{banerjee2020demystifying} to enable private payments.
This work only concerns itself with the latter.

Transactions in Sapling can contain transparent inputs, outputs, and scripts, all of which work as in Bitcoin~\cite{nakamoto2008bitcoin}, and shielded \emph{JoinSplit}\footnote{We ignore JoinSplit transfers, which Sapling only supports for backwards compatibility.}, \emph{Spend} and \emph{Output transfers}.
Spend and Output transfers are analogous to transparent inputs and outputs, respectively. Each Spend transfer spends a \emph{note}, and each Output transfer creates one.
A note represents that a value \val is spendable by the recipient who holds the private key to the destination \emph{shielded payment address}. A note's sender, recipient and value are never revealed.

All Spend and Output transfers in a transaction, along with any transparent inputs and outputs, are checked to balance by verifying that the sum of all value commitments and of all transparent values is equal to zero.

To each note there is a cryptographically associated \emph{note commitment}, which is added to the \emph{note commitment tree} when the note is created. Only notes whose note commitment is in the note commitment tree can be spent. When the note is spent, a \emph{nullifier} uniquely associated with that note must be revealed and is then added to the \emph{nullifier set}. It is infeasible to compute the nullifier without the spending key corresponding to the recipient's shielded payment address, and only notes whose nullifier is not in the nullifier set can be spent.

The main premise of Zcash's shielded payment scheme is that when a note is spent, the spender only proves that its note commitment is in the note commitment tree. Revealing its nullifier also does not reveal which note commitment it is associated with, which means that a spent note cannot be linked to the transaction in which it was created.

The values required to spend a note are generally encrypted to its recipient when it is created, though the protocol does not enforce it.
This has been a challenge in designing \zclaim, since it is not enough to verify that a note exists in the note commitment tree in order to verify a transaction.
Instead, we must also ensure that the counterparty receives these values.

\section{\zclaim Protocol}
\label{sec:protocol}

We introduce \zclaim, an adaptation of \xclaim to Zcash so as to facilitate private transfers to any blockchain that supports custom logic and efficient verification of the required cryptographic functions~\cite[Section 5.2]{sanchez2020confidential}.

We define the following terms:
\begin{itemize}
    \item \textbf{ZEC} denotes the Zcash cryptocurrency.
    \item \textbf{wZEC} denotes wrapped ZEC, the tokenised representation of ZEC on the issuing chain.
    \item \textbf{$I$} is the issuing chain, on which wZEC will be created.
    \item \textbf{$i$} is an implementation-defined existing currency on $I$.
\end{itemize}

\subsection{Actors}

The following actors participate in the protocol:
\begin{itemize}
    \item \textbf{Issuers} lock ZEC on Zcash to request an equivalent amount of wZEC on $I$.
    \item \textbf{Redeemers} destroy wZEC on $I$ to request the corresponding amount of ZEC on Zcash.
    \item \textbf{Vaults} are the non-trusted intermediaries that act as custodians, safekeeping locked ZEC.
    Anyone can take on the role of a vault by locking some collateral in $i$ and registering as such.
    Vaults are liable for fulfilling redeem requests of wZEC for ZEC.
    They are incentivised by fees they derive from transactions and in case of misbehaviour, they face partial or total liquidation of their collateral.
\end{itemize}

\subsection{Components}

The functionality that \zclaim requires on the issuing chain can be split into the following components.

\subsubsection{Vault registry}

The vault registry keeps a public list of all registered vaults and their status.
Each vault has a Zcash shielded payment address $(d, pk_d)$ and an amount of collateral associated with it.
The collateral it keeps in $i$ on the issuing chain and thus the total amount of ZEC it is able to accept are public; however, the amount of ZEC \emph{obligations} to its name, i.e. ZEC it participated in issuing and is accountable for releasing, is not.
It periodically proves that this amount is properly backed by its collateral by submitting \emph{proofs of balance}, which consist of a zk-SNARK accounting for all previous requests it has served.

\subsubsection{Relay system}
\label{sec:relay_system}
The relay system~\cite{buterin2016interop} keeps track of the state of the Zcash chain.
Specifically, it verifies and stores block headers, provides a mechanism to signal that consensus has been reached on a given block similarly to an SPV or light client~\cite{SPVBitcoinWiki,Back2014sidechains}, and allows the verification of Zcash notes.
Reaching consensus, in this context, requires fulfilling implementation-specific criteria w.r.t. Zcash's probabilistic finality; a possible approach is discussed in \cref{sec:relay_poisoning}.
In order to prove the existence of a note, users must provide an \emph{inclusion proof}, which consists of a Merkle path from its note commitment to the root of the note commitment tree in the block header of a previous block.
On the other hand, the relay system does not need to offer functionality to verify whether a nullifier already exists in the nullifier set.

Block headers are submitted to the relay system by \emph{relayers}, which may or may not be economically incentivised.
Other protocol actors are likely to take on the additional role of relayers if the cost associated with doing so is low and it is in their interest to guarantee the stability of the protocol.
Vaults, for example, are in such a position: they need to run full nodes of both chains, and the financial damage they may incur from an attack on the relay system as discussed in \cref{sec:relay_poisoning} is far greater than the cost of submitting headers.

\subsubsection{Exchange rate oracle}
The exchange rate oracle \oxr provides an exchange rate that reflects the prevailing market value of 1 ZEC in $i$.
The design of the exchange rate oracle falls outside the scope of this work, but a brief discussion on exchange rate sourcing and security concerns can be found in \cref{sec:er_poisoning}.

\subsubsection{Protocol logic}
The logic and state pertaining to the issue and redeem protocols may but need not be contained in a set of smart contracts.
Alternative approaches to user-defined on-chain logic may allow these to be stored e.g. in the runtime~\cite{burdges2020overview} or as precompiled modules~\cite{dfinity2022internet}.

\subsection{Transfers and transactions}

Along with Sapling's Spend and Output transfers, the issuing chain supports Mint and Burn transfers.
A Mint transfer is a transfer creating value of the issued currency, i.e. increasing the circulating supply.
A Burn transfer takes the burnt amount out of circulation.
Transactions in the issuing chain may consist of any of the same components as a Sapling transaction, and either a Mint or a Burn transfer.
A transaction containing a Mint transfer is a \mint transaction, and one containing a Burn transfer a \burn transaction.
Mint transfers contain a zk-SNARK proving that a note has been sent to a vault on Zcash and that the minted value corresponds to the locked value. 
In Burn transfers, redeemers create a note they wish to receive on Zcash and encrypt it to the vault, while publishing its note commitment.
The zk-SNARK in Burn transfers allows redeemers to show that the value of said note matches the burnt value.
The transaction is only confirmed once the vault provides an inclusion proof for this note.
We refer the reader to the original specification of this protocol~\cite[Section 5]{sanchez2020confidential} for concrete definitions of the aforementioned transfers and of zk-SNARKs mentioned in this work.

\subsection{Issuing and Redeeming}
\label{sec:protocols_highlevel}

We define two sub-protocols, Issue and Redeem, adapted from \xclaim to address the challenges arising from integrating Zcash's anonymity features.

The Issue sub-protocol allows issuers to lock ZEC with vaults and mint wZEC on the issuing chain.
In order to conceal the transferred amount from the vaults, the issuer splits the total into separate amounts beforehand and sends them to $n$ vaults such that no vault is able to deduce the total.
We present and analyse a possible \emph{splitting strategy} to this end in \cref{sec:splitting_strategy}.
For each of these amounts, the issuer carries out an Issue \emph{procedure}, i.e. an instance of the Issue sub-protocol.
In short, the issuer proves that they have created a note encrypted to a vault's address and mints an equivalent amount of wZEC in a \mint transaction.
The existence of this note does not guarantee that it can be spent by the vault, hence there is a challenge period during which vaults can void the transaction by proving that the necessary values have not been properly encrypted to them.

The Redeem sub-protocol allows a redeemer to receive ZEC from vaults in exchange for burning wZEC on the issuing chain. 
Redeemers shall employ the same splitting strategy as issuers.
For each individual amount, a redeemer submits a \burn transaction on the issuing chain which contains a commitment to a note of equivalent value spendable by an address under their control on Zcash.
A vault then creates this note.
Since the vault must also learn the values from which the note is derived in order to create it, there is a challenge period during which it can challenge the request by showing that these were not properly encrypted.
If the values are correct, the vault must create the note specified by the redeemer.
Lastly, the vault submits a proof showing that the note exists and the \burn transaction is finalised.

We present an overview of both sub-protocols.
Diagrams depicting these steps are shown in \cref{fig:zclaim_diagrams}.

\begin{figure}
\centering
\vspace{.2\bigskipamount}
\hspace{1.3\bigskipamount}
\begin{subfigure}[b]{.4\textwidth}
  \includegraphics[width=\linewidth]{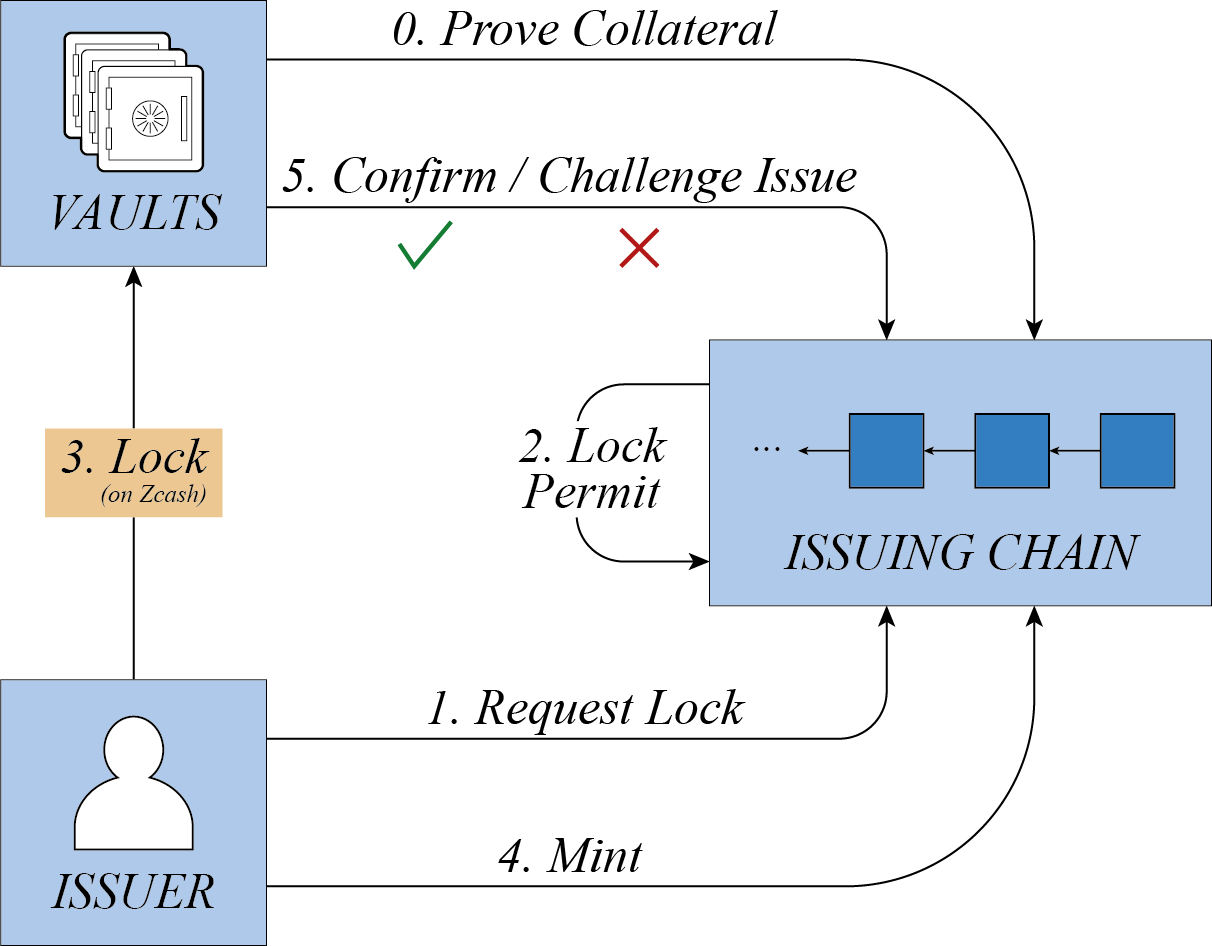}
  \subcaption{Issue}
\end{subfigure}%
\par\vspace{1.2\bigskipamount}
\hspace{1.3\bigskipamount}
\begin{subfigure}[b]{.4\textwidth}
  \includegraphics[width=\linewidth]{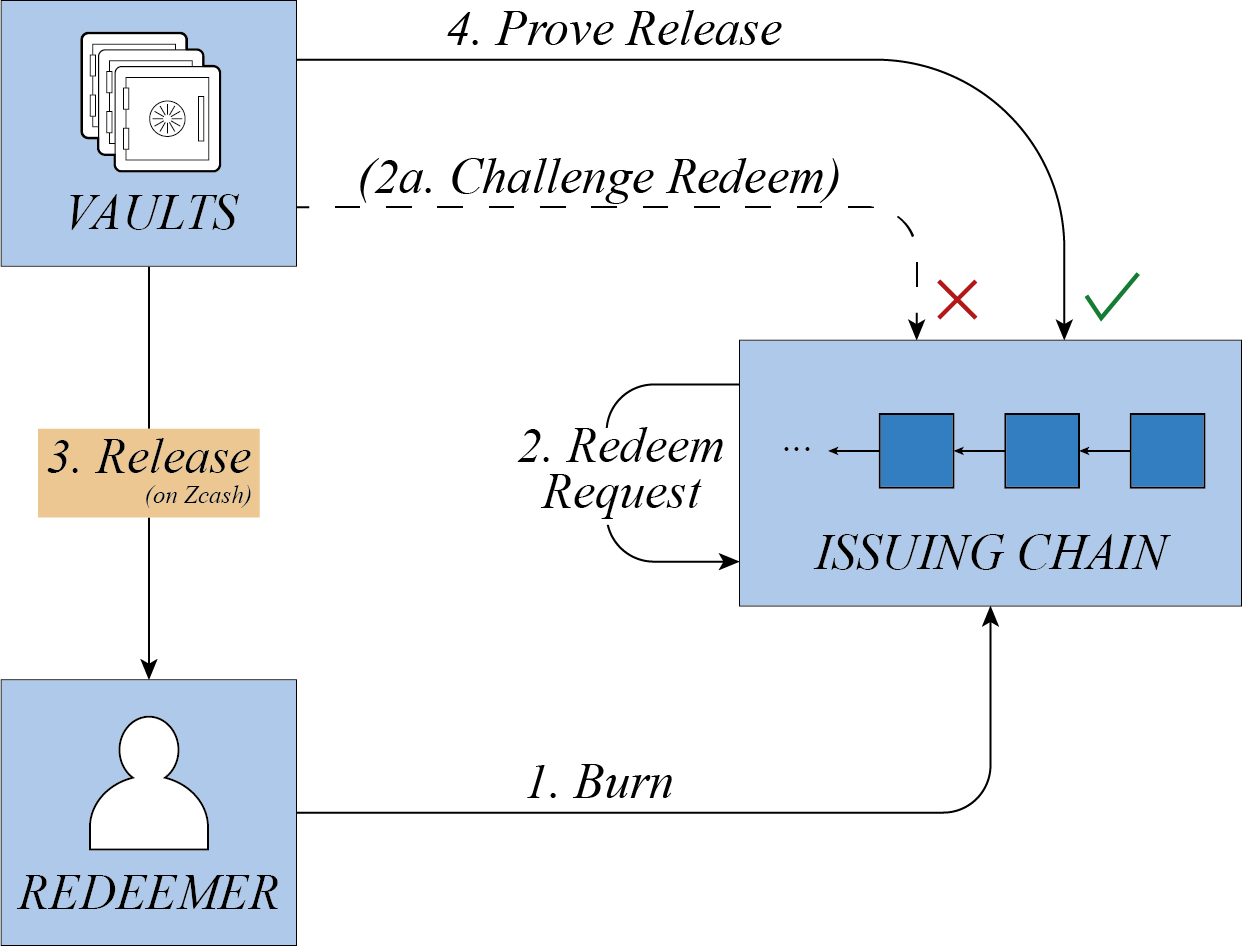}
  \caption{Redeem}
\end{subfigure}
\caption{Simplified diagrams of \zclaim protocols.}
\label{fig:zclaim_diagrams}
\end{figure}

\textbf{Issuing:} Alice (issuer) locks funds with vault V on Zcash to create units of wZEC on $I$.
\begin{enumerate}[start=0]
    \item \emph{Setup.} V registers with the vault registry on $I$ and locks  $i_{col}$ units of collateral, where it must hold that
    \begin{equation}\label{eq:collateral_issue}
        i_{col} \geq \vmax \cdot (1 - f) \cdot \sstd \cdot xr_{cap}
    \end{equation}
    where \vmax is the maximum amount of funds that can be locked or burned per request, $f$ is the \zclaim transaction fee, and $\sstd$ is the standard collateralisation rate.
    These values are implementation-defined constants.
    $xr_{cap}$ is the exchange rate at the time the vault provides a proof of capacity as explained next.
    Further, V must provide a shielded payment address on Zcash to which issuers will send their funds and submit a \emph{proof of capacity} to the vault registry, which is a zk-SNARK proving that \cref{eq:collateral_issue} holds.
    V must periodically resubmit proofs of capacity in order to remain available for requests.

    \item \emph{Commit.} Alice makes a request to the issuing chain to lock her funds with vault V with shielded payment address $\dpa_\text{V}$.
    As part of this request, Alice locks a small amount of $i$, \iw, her \emph{warranty collateral}.
    This is a fixed amount large enough to compensate V for the opportunity cost in case Alice does not follow through with her request.
    Furthermore, it serves to prevent griefing attacks in which the system is spammed with lock requests, becoming unavailable to legitimate users.
    
    \item \emph{Lock permit.} Subsequently, a \emph{lock permit} is created on $I$, granting permission to Alice to lock her funds with V.
    This permit contains a cryptographic nonce $n_{lock}$ that Alice must include in the transaction locking the funds.

    If she fails to execute step 4 within some \dm, her warranty collateral \iw is transferred to V.
    The constant \dm will largely depend on how fast the relay system considers blocks to have reached consensus.
    As a point of reference, popular cryptocurrency exchanges require a depth of 24 blocks or 30 minutes~\cite{DepositProcessingTimesKraken,DepositProcessingTimesGemini}.

    \item \emph{Lock.} Alice creates a shielded transaction on Zcash, sending $ZEC_{lock}$ to V.
    Alice uses $n_{lock}$ to derive the randomness \rcm, the so-called commitment trapdoor, used to generate the commitment for the output note \n addressed to V in this transaction.
    
    \item \emph{Create.} Alice makes a request to issue $wZEC_{create}$ to a shielded address on $I$.
    To this end, Alice provides an inclusion proof for a note with note commitment $cm_\n$, and further proves in zero knowledge that:
    \begin{itemize}
        \item she knows a note \n with note commitment $cm_\n$, recipient $\dpa_\text{V}$ and value $ZEC_{lock}$
        \item $wZEC_{create} = ZEC_{lock} (1 - f)$, where $f$ is the fixed fee rate the vault earns, and $ZEC_{lock} \leq \vmax$
        \item the trapdoor \rcm was derived from $n_{lock}$
    \end{itemize}
    The transaction $T_{\mintop}$ issuing $wZEC_{create}$ remains pending until V confirms it.
    If V fails to do so within some \dci, the same \iw is deducted from V's collateral and transferred to Alice, and $T_{\mintop}$ is confirmed.
    This delay may be quite small, as it only needs to allow for V to see and respond to $T_{\mintop}$ on $I$.
    Furthermore, Alice publishes a note ciphertext $C^{\text{V}}$ of the note \n symmetrically encrypted to V.

    \item \emph{Confirm/Challenge.} V decrypts $C^{\text{V}}$ and verifies whether the resulting note has note commitment $cm_\n$.
    If that is the case, V confirms and the issuing process is complete.

    On the other hand, if V finds that it cannot properly decrypt $C^{\text{V}}$, it may challenge the transaction by revealing the shared secret used in the encryption while proving its correctness in a zk-SNARK.
    It can then be verified on chain that the encryption was erroneous, in which case $T_{\mintop}$ is discarded and Alice loses $ZEC_{lock}$ and \iw.
\end{enumerate}

\textbf{Redeeming:} Dave (redeemer) burns wZEC on $I$ and obtains ZEC from vault V.
\begin{enumerate}[start=0]
    \item \emph{Setup.} V is available to redeem, i.e.\ has not provided a \emph{proof of insolvency} to the vault registry since it last participated in an Issue procedure.
    Vaults are available for redeem requests by default.
    In order to be exempted from them, they need to periodically provide proofs of insolvency, showing that their ZEC obligations are smaller than \vmax.
    This simplified requirement ensures users will always be able to redeem the existing supply.

    \item \emph{Burn.} Dave makes a request to burn funds $wZEC_{burn}$ on $I$ by locking \iw as warranty collateral and submitting a transaction $T_{\burnop}$.

    \item \emph{Redeem request.} In this transaction, Dave specifies that he would like to redeem ZEC from V in a note \n with note commitment $cm_\n$, and proves in zero knowledge that:
    \begin{itemize}
        \item he knows a note \n with note commitment $cm_\n$ and value $ZEC_{release}$
        \item $ZEC_{release} = wZEC_{burn} (1 - f)$, where $f$ is the fixed fee rate that the vault earns, and $wZEC_{burn} \leq \vmax$
    \end{itemize}
    
    In redeem requests, vaults earn fees implicitly through their ZEC obligations decreasing by a larger amount than the ZEC amount they release.
    
    In order to transmit the note values to V, Dave publishes the note ciphertext $C^{\text{V}}$ of the note \n encrypted to V.
    $T_{\burnop}$ is not confirmed until V confirms the release of $ZEC_{release}$.
    If V fails to do so or to challenge it within some \dcr, \iw is deducted from V's collateral and transferred to Dave, and $T_{\burnop}$ is discarded.
    \dcr, like \dm, will largely depend on how fast the relay system accepts blocks.
    
    \item \emph{Release.} V releases $ZEC_{release}$ to Dave by creating a note \n with note commitment $cm_\n$.

    \item \emph{Confirm/Challenge.} V waits until the relay system signals that consensus has been reached on the block in which \n was created and then submits an inclusion proof for this note.
    $T_{\burnop}$ is then confirmed.
    
    However, if upon decryption of $C^{\text{V}}$ V finds that the resulting plaintext does not correspond to a note with note commitment $cm_\n$, it may challenge the transaction as in step 5 of the Issue protocol, in which case $T_{\burnop}$ is voided and Dave's warranty collateral is transferred to V.
    In this case, V does not execute step 3.
\end{enumerate}

\subsection{Operations}
\label{sec:ops}

We now define abstract operations for the two protocols introduced in the previous section.

An operation always results in a transaction.
On $I$, this may either be a monetary transfer or a change in state.
Each operation leads to a specific state of the protocol, and to each state there is an implicitly associated set of legal operations for each party.

We denote by $T_{op}^{A}$ the transaction on blockchain $A$ resulting from the successful execution of operation $op$.
$A$ may be either Zcash, denoted by $Z$, or the issuing chain, denoted by $I$.
Certain transactions on $I$ require confirmation from the other party involved in the Issue or Redeem procedure and may be voided if this confirmation is not provided within a certain delay. 
We denote a transaction that is pending confirmation as $(T_{op}^I)$, and one that is voided as $\cancel{(T_{op}^I)}$.

An operation that results in transaction $T_{op}$, can be performed in state \statess and leads to state \stateos is denoted by $op \rightarrow T_{op}^{A} \, [\statess \rightarrow \stateos]$.
For simplicity, we omit non-monetary and warranty collateral transactions as well as slashing, all of which take place only on $I$.
All operations can only be performed by a specific actor.

Operations performed by issuers:
\begin{itemize}
    \item $\requestLockop \; [\statevai \rightarrow \stateam]$ requests permission to lock funds with a vault.
    The issuer locks \iw as collateral.
    
    \item $\lockop \rightarrow T^Z_{\lockop} \; [\stateam \rightarrow \stateam]$ locks ZEC with a vault.
    
    \item $\mintop \rightarrow (T^I_{\mintop}) \; [\stateam \rightarrow \stateaic]$ allows the issuer to mint a hidden amount of wZEC on $I$ upon confirmation.

    If the vault fails to perform \confirmIssueop within \dci, $T^I_{\mintop}$ is automatically confirmed and \iw is deducted from the vault's collateral.
\end{itemize}

Operations performed by redeemers:
\begin{itemize}
    \item \mbox{$\burnop \rightarrow (T^I_{\burnop}) \; [\statevar \rightarrow \statearc]$} requests a specific vault to release ZEC and burns an equivalent amount of wZEC upon confirmation.
    The redeemer locks \iw as collateral.
    
    If the vault fails to perform \confirmRedeemop within \dcr, $T^I_{\burnop}$ is voided and \iw is deducted from the vault's collateral.
\end{itemize}

Operations performed by vaults:
\begin{itemize}
    \item $\challengeIssueop \; [\stateaic \rightarrow \stateic]$ allows a vault to prove that $T^Z_{\lockop}$ has not been correctly encrypted to it in $(T^I_{\mintop})$.
    
    \item $\challengeRedeemop \; [\statearc \rightarrow \staterc]$ allows a vault to prove that $T^Z_{\releaseop}$ has not been correctly encrypted to it in $(T^I_{\burnop})$.
    
    \item $\releaseop \rightarrow T^Z_{\releaseop} \; [\statearc \rightarrow \statearc]$ releases funds to a redeemer.
    
    \item $\confirmIssueop \; [\stateaic \rightarrow \stateis]$ allows a vault to confirm that it has received $T^Z_{\lockop}$, confirming the pending $(T^I_{\mintop})$ transaction.
    
    \item $\confirmRedeemop \; [\statearc \rightarrow \staters]$ proves that the vault has released funds in $T^Z_{\releaseop}$, confirming the pending $(T^I_{\burnop})$ transaction.
    
    \item
    \begin{sloppypar}
    $\submitPOBop \; [\statevr | \statevai | \allowbreak\stateni \rightarrow \stateni]$ submits a proof of balance.
    Vaults may perform this operation in order to prevent liquidation instead of \submitPOCop if they don't wish to become available for new lock requests.
    \end{sloppypar}
    
    \item
    \begin{sloppypar}
    $\submitPOCop \; [\statevr\allowbreak|\statevai\allowbreak|\stateni \rightarrow \statevai]$ submits a proof of capacity.
    \end{sloppypar}
    
    \item
    \begin{sloppypar}
    $\submitPOIop \; [\statenr\allowbreak|\statevar \rightarrow \statenr]$ submits a proof of insolvency.
    \end{sloppypar}
\end{itemize}

Based on these operations, the Issue and Redeem protocols are summarised in pseudocode in \cref{alg:issue,alg:redeem}, respectively.

\begin{minipage}[t]{.45\textwidth}
\begin{algorithm}[H]
    \caption{Issue}\label{alg:issue}
    \footnotesize
    \begin{algorithmic}[1]
        \REQUIRE \vault has enough collateral
        \STATE \vault executes \submitPOCop
        \IF{\issuer executes \requestLockop and receives a lock permit}
            \STATE \issuer executes $\lockop ^{\rightarrow T^Z_{\lockop}}$ 
            \IF{\issuer executes $\mintop ^{\rightarrow (T^I_{\mintop})}$ within \dm}
                \IF{\vault executes \challengeIssueop within \dci}
                    \STATE $\rightarrow \cancel{T^I_{\mintop}}$
                \ELSE
                    \STATE \vault may execute \confirmIssueop within \dci
                    \STATE $\rightarrow T^I_{\mintop}$
                \ENDIF
            \ENDIF
        \ENDIF
    \end{algorithmic}
\end{algorithm}
\end{minipage}

\begin{minipage}[t]{.45\textwidth}
\centering
\begin{algorithm}[H]
    \caption{Redeem}\label{alg:redeem}
    \footnotesize
    \begin{algorithmic}[1]
        \REQUIRE \vault has not called \submitPOIop since last issuing
        \STATE \redeemer executes $\burnop ^{\rightarrow (T^I_{\burnop})}$
        \IF{\vault executes \challengeRedeemop within \dcr}
            \STATE $\rightarrow \cancel{T^I_{\burnop}}$
        \ELSE
            \STATE \vault executes $\releaseop ^{\rightarrow T^Z_{\releaseop}}$
            \IF{\vault executes \confirmRedeemop within \dcr}
                \STATE $\rightarrow T^I_{\burnop}$
            \ELSE
                \STATE $\rightarrow \cancel{T^I_{\burnop}}$
            \ENDIF
        \ENDIF
    \end{algorithmic}
\end{algorithm}
\end{minipage}

\section{Attack vectors and points of failure}
\label{sec:analysis}

We discuss here a range of attacks and points of failure in \zclaim and offer mitigation strategies.
Where not specified otherwise, the discussion on vulnerabilities presented in the security analysis of \xclaim~\cite[Section~VII]{zamyatin2019xclaim} also holds here.

\subsection{Inference attacks}
\label{sec:inference_attacks}

Without a robust splitting strategy, vaults may guess the users' identity through the amounts in \lock and \release transactions in which they are involved.

In the simplest scenario, if a user locks an amount with a vault which matches a recent transparent-to-shielded transaction, the vault may deduce the origin of the funds~\cite{quesnelle2017linkability} and infer the user's identity from activity associated with the transparent address.

The same attack is possible if a vault is able to deduce or estimate the total from one or a small subset of the amounts sent to it.
The splitting strategy defined in \cref{sec:splitting_strategy} aims to prevent this sort of attack.
The privacy provided by this strategy is analysed in the remainder of said section.

\subsection{Chain relay poisoning}
\label{sec:relay_poisoning}

Chain relay poisoning involves an adversary triggering a chain reorganisation such that a previously accepted transaction is invalidated; in other words, bypassing the relay system's consensus validation mechanism.

Since Zcash has probabilistic finality, this is in theory always possible.
However, a common practice consists in deeming only blocks at depth $h \geq k$ to have reached consensus, where $k$ is a security parameter denoting the block depth at which the likelihood of an adversary with computational power bounded by $\alpha \leq 33\%$~\cite{gervais2016security} triggering a chain reorganisation becomes negligible.
This threshold is usually a low number due to the probability of a successful attack decreasing exponentially with increasing depth.
Nevertheless, it must be noted that the dangerous assumption here is not the threshold $k$ but $\alpha$.
The costs required to reach this threshold vary wildly for different cryptocurrencies and a more in-depth exploration of the feasibility of such an attack on Zcash is recommended in order to find sensible values for these parameters.

Furthermore, a poisoning attack may be successful well below this threshold $\alpha$ if the relay system is deprived of recent block header data~\cite[Section VII-A]{zamyatin2019xclaim}.
This is a common vulnerability to interoperability schemes and may involve intricate attacks in which relayers are isolated from the rest of their peers in the network and misled to accept the attacker's chain as the longest, in what is commonly known as an eclipse attack~\cite{heilman2015eclipse}.
Note that as long as there is one honest relayer connected to the rest of the network, the cost of attacking the relay system is the same as that of running a 51\% attack, since block headers are verified based on proof of work and not on the number of relayers submitting them.

Such attacks along with mitigation strategies have been discussed previously in the literature~\cite{heilman2015eclipse,wust2016ethereum,xu2020eclipsed,alangot2020decentralized}, and we refer the reader in particular to analyses on Bitcoin and Bitcoin-based blockchains (such as Zcash is) for mitigation strategies~\cite{heilman2015eclipse,alangot2020decentralized}.

\subsection{Exchange rate poisoning}
\label{sec:er_poisoning}

If the exchange rate oracle is manipulated to provide an erroneous price feed, the protocol may fail to maintain proper collateralisation of the issued assets.
An artificially high exchange rate would allow vaults to issue ZEC or unlock collateral such that they become undercollateralised after the attack.
On the other hand, an artificially low exchange rate may trigger mass liquidation and allow users to buy vaults' collateral at an unfair price.

It is thus important to guarantee the reliability of the exchange rate oracle.
Blockchain oracles aim to solve this exact problem~\cite{peterson2015augur,ellis2017chainlink}, aggregating exchange rates from different sources, leveraging economic incentives to reinforce their veracity and providing dispute mechanisms in case of discrepancies.
These systems are, without question, safer than relying on a single source to provide an exchange rate, though they may still fail under certain circumstances~\cite{lo2020reliabilityoracles}.

\subsection{Replay attacks on inclusion proofs}

\zclaim prevents replay attacks on \lock and \release transaction inclusion proofs, in which a user reuses a past \lock transaction to mint wZEC or a vault reuses a \release transaction to decrease their ZEC obligations, as follows.

The nonce $\nlock$ generated in lock permits must be used to generate the note commitment trapdoor in \lock transactions, which is enforced in the zero knowledge proof in Mint transfers.
This ensures that every \lock transaction is uniquely associated with the corresponding Issue procedure.
As for \release transactions, protection from replay attacks is implicit since the note commitment is generated in advance by the redeemer.
A vault can only replay a release inclusion proof if the redeemer purposefully chooses the same note values, most notably the same note commitment trapdoor as in a previous Burn transfer.

\subsection{Counterfeiting}
\label{sec:counterfeiting}

We define counterfeiting as the issuing of wZEC which is not backed by an equivalent amount of collateral.
As previously outlined, vaults must periodically provide proofs of balance showing that their ZEC obligations are collateralised above a certain ratio.
If a vault fails to do so and the exchange rate changes by a certain margin since their last statement, partial liquidation of their collateral is triggered.
This ensures the collateralisation ratio never falls below a certain minimum.
Furthermore, proofs of capacity ensure that such a statement will hold after a mint. See~\cite[Section 5.12]{sanchez2020confidential} for more details on balance statements.

Note that wZEC does not need to be fully backed by ZEC, which is impossible to guarantee while maintaining the vaults' transaction history private.
A vault may very well reuse ZEC it has received to issue more wZEC; but it will remain unable to unlock its collateral until it has released ZEC to redeemers or acquired and burnt wZEC itself.

\subsection{Sudden devaluation}
\label{sec:black_swan}

Given the historically high volatility in cryptocurrency markets~\cite{fry2016negative,cheah2015speculative}, the risk of sudden, extreme devaluation of either one of Zcash or the issuing currency w.r.t. the other is non-negligible.
In case of a drop in the valuation of Zcash, \zclaim would continue to operate normally.
In the opposite case, the consequences would be similar to those discussed in \cref{sec:er_poisoning} and \zclaim would eventually no longer function.
The strategy proposed in \xclaim \cite[p. 3]{zamyatin2019xclaim} can be employed to mitigate the risk.

\section{Splitting strategy and analysis of inference attacks}
\label{sec:privacy-inference}

In this section, we demonstrate that vaults cannot infer the total value prior to splitting. We consider a simple threat model that describes what a single vault can infer from each quantity received. We assume the user has an amount $\vtot$ to transfer which is split among $k$ vaults, where $k$ is a low power of two. If any vault receives an amount $v$, the vault knows that $\vtot \geq v$. If the user transfers an amount $\vtot$, then at least one vault will receive $\vtot/k$ and so they can infer that the total amount must be at least $\vtot/k$. This means that at the very high values of $\vtot$, anonymity suffers.

We remark here that transaction fees on Zcash are generally very low\footnote{The default is 0.00001 ZEC~\cite{ZcashFea8:online}, worth around 0.00176 USD at the time of writing.}, hence requiring a total of up to e.g.\ $k = 16$ transactions per transfer is a realistic solution.
Nevertheless, this also needs to be the case on the issuing chain.

\subsection{A scale-independent prior}

The amount received by each vault is used to assign a probability to the total amount using Bayesian inference. This relies on a prior distribution on the total amount a user might wish to transfer. This amount may span many orders of magnitude, but the scale must be kept private. We assume that the total amount is an integer in the range $[1,2^h-1]$. The prior distribution is then sampled as $T=2^N+A$, where $N$ is chosen uniformly from the integers in $[0,h]$, and $A$ is chosen uniformly from the integers in $[0,2^N-1]$. 

\subsection{Proposed splitting protocol} \label{sec:splitting_strategy}

In our splitting protocol, each user splits their total amount $\vtot$ into $k$ pieces. We restrict the size of each piece to $0$ or powers of two. To attribute piece sizes, an obvious protocol is to use the powers of two that correspond to the highest $1$ bits in the binary representation of the total amount $\vtot$. However, this would give away information about the total, since each piece would eliminate half the possible values. We get around this only using pieces of sizes from 0 to $2^m$, where $m=h+1-\log_2 k$. If \vtot is large, many pieces must be of size $2^m$, which unfortunately means there are less pieces to use on bits.

If the total amount $t$ to split is in  $[1, 2^{m}-1]$ then the splitting procedure is as follows:

\begin{itemize}
    \item Let $e=\max \{1, 2^{\lfloor \log_2 \vtot \rfloor +1 - k/2} \}$.
    \item Choose an integer $i$ uniformly at random from $[0,e \lfloor \vtot/e\rfloor]$.
    \item When $e > 1$, do not transfer $\vtot-e\lfloor \vtot/e \rfloor$ tokens.
    \item Split the remaining $e\lfloor \vtot/e\rfloor$ tokens into $e i$ and $e (\lfloor \vtot/e \rfloor-i)$, then split each of those into powers of $2$.
    \item If this gives less than $k$ pieces, then pad with $0$.
\end{itemize}

If the amount \vtot is in $[2^{m+1},2^h-1]$, then the splitting procedure is as follows:

\begin{itemize}
    \item Let $d=\lfloor \vtot/2^m \rfloor-1$, $c=\lfloor (k-d)/2 \rfloor$ and $e=2^{m-c}$.
    \item Split the total amount into $d$ pieces of size $2^m$.
    \item Do not transfer $\vtot-e \lfloor \vtot/e \rfloor$ tokens.
    \item Choose an integer $i$ uniformly at random from $[0,  \lfloor \vtot/e \rfloor - d 2^m/e]$ and split the remaining $e \lfloor \vtot/e \rfloor - d 2^m$ tokens firstly into $ie$ and $e \lfloor \vtot/e \rfloor - d 2^m-ie$. 
    \item Split these into powers of two and pad with $0$s to form $k$ pieces.
\end{itemize}

Note that this never gives the user more than $k$ pieces.

\subsection{Forward probabilities}

Assuming the user has a total of $\vtot$ tokens, we now consider the probability that a vault chosen at random from the $k$ vaults has a piece size of $2^j$, for some $0 \leq j \leq m$. We write $X_0,X_1,\dots,X_{m+1}$ to denote random variables indicating the number of pieces of size $0,1,\dots, 2^m$ respectively. We write $T$ for the random variable of the total amount. The probability that a randomly-chosen vault has a piece size of $2^{j-1}$ for some $1 \leq j \leq m+1$, given that the user had $\vtot$ in total, is then $E[X_j|T=\vtot]/k$.

\begin{lemma} \label{lem:binary-random-split}
If $i$ is selected uniformly at random from the integers in  $[0,2^c+a]$ for some integer $0 \leq a < 2^c$, and $Y_j$ denotes the $j$th bit of i, then:
\begin{enumerate}[label=(\roman*),leftmargin=1cm]
    \item For $0 \leq j \leq c$, $1/4 \leq \Pr[Y_j=1] \leq 3/4$.
    \item $\Pr[Y_{c+1}=1] \leq 1/2$.
    \item The expected number of 1s in the binary expansion of $i$, $E[\sum_j Y_j]$, is between $c/4$ and $(3c+2)/4$.
\end{enumerate}
\end{lemma}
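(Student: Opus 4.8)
The plan is to analyze the random variable $i$ uniformly distributed on $\{0,1,\dots,2^c+a\}$, which has $2^c+a+1$ equally likely values, and to compute (or bound) the bit-marginals $\Pr[Y_j=1]$ directly by counting. First I would write $i$ in its binary expansion; the key structural observation is that $\{0,\dots,2^c-1\}$ contributes a ``complete'' block in which every bit $Y_j$ for $0\le j\le c-1$ is exactly balanced (probability $1/2$ each, conditioned on being in that block), while the ``overhang'' $\{2^c,\dots,2^c+a\}$ is a block of size $a+1\le 2^c$ in which bit $c$ is always $1$ and the lower bits $Y_0,\dots,Y_{c-1}$ run over $\{0,\dots,a\}$. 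So for $0\le j\le c-1$ I would split $\Pr[Y_j=1]$ as a weighted average of $1/2$ (from the complete block, weight $\tfrac{2^c}{2^c+a+1}$) and $\Pr[\,j\text{th bit of }\mathrm{Unif}\{0,\dots,a\}=1\,]$ (from the overhang, weight $\tfrac{a+1}{2^c+a+1}$). Since the weight on the complete block is at least $1/2$ (because $a+1\le 2^c$) and the overhang contribution lies in $[0,1]$, the average lies in $[1/4,3/4]$; that disposes of (i) for $j\le c-1$. For $j=c$: bit $c$ equals $1$ exactly on the overhang, so $\Pr[Y_c=1]=\tfrac{a+1}{2^c+a+1}\le \tfrac12$ since $a+1\le 2^c$, and $\ge \tfrac{1}{2^c+1}\ge$ something, but we only need the upper half plus the lower bound $1/4$; here I'd note $\Pr[Y_c=1]\le 1/2$ directly gives the upper bound, and for the lower bound $1/4$ I'd observe $a+1\ge 1$ so $\Pr[Y_c=1]\ge \tfrac{1}{2^c+a+1}$, which could be small — so I must be more careful and instead note that when $j=c$ the statement only claims membership in $[1/4,3/4]$, and recheck: actually if $a=0$ then $\Pr[Y_c=1]=\tfrac{1}{2^c+1}<1/4$ for $c\ge 2$. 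This is the main obstacle (see below).

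**Resolving the boundary bit and proving (ii), (iii).**

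On reflection, the cleanest fix is to treat $j=c$ together with part (ii): part (ii) asserts $\Pr[Y_{c+1}=1]\le 1/2$, but note $i\le 2^c+a<2^{c+1}$, so in fact $Y_{c+1}=0$ always and the bound is trivial — so I suspect the intended reading of (i) is $0\le j\le c$ with the understanding that at the top the lower bound may need the hypothesis $a<2^c$ used more tightly, OR that (i) is really meant for $0\le j\le c-1$ and the index bookkeeping in the lemma absorbs an off-by-one. Given the excerpt, I would simply prove: for $0\le j\le c-1$, $1/4\le\Pr[Y_j=1]\le 3/4$ by the weighted-average argument; for $j=c$, $\Pr[Y_c=1]=\tfrac{a+1}{2^c+a+1}\le 1/2$; this already yields (ii) (with room to spare, since $Y_{c+1}\equiv 0$). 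For (iii), I would sum: $E[\sum_j Y_j]=\sum_{j=0}^{c-1}\Pr[Y_j=1]+\Pr[Y_c=1]$. Using $\Pr[Y_j=1]\ge 1/4$ for $j\le c-1$ and $\Pr[Y_c=1]\ge 0$ gives the lower bound $c/4$; using $\Pr[Y_j=1]\le 3/4$ for $j\le c-1$ and $\Pr[Y_c=1]\le 1/2$ gives $\tfrac{3c}{4}+\tfrac12=\tfrac{3c+2}{4}$, exactly matching the claimed upper bound. The matching constant $+2$ in (iii) is a good consistency check that the intended decomposition is $c$ ``balanced-ish'' bits plus one ``top'' bit bounded by $1/2$.

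**Main obstacle.**

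The main obstacle is the edge behaviour of the top bit and the precise index range in part (i): a naive weighted-average bound breaks down at $j=c$ when the overhang $a$ is small, so the proof must handle that bit separately using the sharper identity $\Pr[Y_c=1]=\tfrac{a+1}{2^c+a+1}$ and the hypothesis $a<2^c$ (equivalently $a+1\le 2^c$) to get $\le 1/2$, while reconciling this with the stated ``$0\le j\le c$'' by noting that the lower bound $1/4$ for that bit either follows from an additional implicit assumption ($a\ge 2^{c-2}$, say, which would hold if the overhang is itself a random split of a comparable magnitude as it is in the calling context), or is simply not needed downstream since in applications only the upper bounds and the expectation bound in (iii) are used. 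Everything else is elementary counting over the $2^c+a+1$ outcomes, together with the single inequality $a+1\le 2^c$, so I expect no further difficulty.
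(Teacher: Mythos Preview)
Your approach is the same as the paper's: split $[0,2^c+a]$ into the balanced block $[0,2^c-1]$ (which carries weight $\ge 1/2$ and in which every low bit is exactly $1/2$) plus the overhang $[2^c,2^c+a]$, and bound each bit probability as a weighted average. You have also correctly identified two edge issues the paper's one-line sketch glosses over---namely that the lower bound $1/4$ at $j=c$ can fail for small $a$ (e.g.\ $a=0$, $c\ge 2$) and that $Y_{c+1}\equiv 0$---and your fix (handle the top bit separately with $\Pr[Y_c=1]=\tfrac{a+1}{2^c+a+1}\le\tfrac12$) is exactly what makes the constant in (iii) come out to $(3c+2)/4$; neither edge case affects (iii) or any downstream use of the lemma.
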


The lemma follows from the fact that selecting $i$ from $[0,2^c-1]$ means that each bit would be $1$ with probability $1/2$, and selecting $i$ from $[0,2^c+a]$ means that $i$ is in $[0,2^c-1]$ with probability over $1/2$.

It follows easily from the definition of the protocol and \cref{lem:binary-random-split} that:

\begin{lemma}~
\label{lem:conditional-ubs}
\begin{enumerate}[label=(\roman*),leftmargin=1cm]
    \item For $1 \leq j \leq m-k/2$, $E[X_j| T=\vtot] \leq 3/2$
    \item $E[X_m|T=\vtot] \leq \lfloor t/2m \rfloor$
    \item $E[X_0|T=\vtot] \leq k$
\end{enumerate}
\end{lemma}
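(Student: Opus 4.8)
\emph{Proof proposal.} All three estimates reduce, through \cref{lem:binary-random-split}, to reading off bits of the uniformly chosen integer $i$ used in the splitting procedure, and the plan is to treat the two ranges of $\vtot$ separately. In each range, every non-zero piece other than the $d$ copies of size $2^m$ is obtained by splitting $ei$ and $e(M-i)$ into powers of two, where $e$ is the scaling factor in force ($e=\max\{1,2^{\lfloor\log_2\vtot\rfloor+1-k/2}\}$ in the first range, $e=2^{m-c}$ in the second) and $M$ is the upper endpoint of the interval $i$ is drawn from. Since $ei=2^{\log_2 e}\,i$, the number of pieces of a fixed power-of-two size $2^{j-1}$ equals the bit of $i$ in position $\ell:=(j-1)-\log_2 e$ plus the same bit of $M-i$, with this count being $0$ whenever $\ell<0$ (then $2^{j-1}<e$, while every non-zero piece is a multiple of $e$). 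As $M-i$ is distributed like $i$, it suffices to bound $\Pr[\text{bit }\ell\text{ of }i=1]$ in each case.

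For \cref{lem:conditional-ubs}(i), fix $1\le j\le m-k/2$. In the first range, if $e>1$ then $M=\lfloor\vtot/e\rfloor$ lies in $[2^{k/2-1},2^{k/2}-1]$, so $M=2^{k/2-1}+a$ with $0\le a<2^{k/2-1}$; if $e=1$ then $M=\vtot<2^{k/2}$, so $M=2^{\lfloor\log_2\vtot\rfloor}+a$ with $0\le a<2^{\lfloor\log_2\vtot\rfloor}$. Either way $i$ is uniform on an interval $[0,2^c+a]$ with $0\le a<2^c$ and $c\le k/2-1$, so by \cref{lem:binary-random-split}(i) every bit of $i$ in positions $0,\dots,c$ is $1$ with probability at most $3/4$, and positions above $c$ are deterministically $0$ since $i<2^{c+1}$. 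Hence each of the (at most two) bit-terms making up $X_j$ has expectation at most $3/4$ — or is identically $0$ when $\ell<0$ — so $E[X_j| T=\vtot]\le 3/2$. In the second range, $\vtot\ge 2^{m+1}$ forces $1\le d\le k/2-1$, hence $c=\lfloor(k-d)/2\rfloor\le k/2-1$ and $e=2^{m-c}>2^{m-k/2}\ge 2^{j-1}$, so $\ell<0$ and $X_j=0$.

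For part (ii) I would bound the number of pieces of the maximal size $2^m$ (that is, $X_{m+1}$ in the paper's indexing): in the first range no piece exceeds $\vtot<2^m$, so this count is $0=\lfloor\vtot/2^m\rfloor$; in the second range the protocol emits $d=\lfloor\vtot/2^m\rfloor-1$ pieces of size $2^m$, and the two residual pieces together sum to $e\lfloor\vtot/e\rfloor-d2^m<2^{m+1}$, so at most one of them can reach $2^m$, giving the count $\le\lfloor\vtot/2^m\rfloor$ deterministically. (Read literally with $X_m$ and $t=\vtot$, the argument of part (i) extended to $j=m$ already gives $E[X_m| T=\vtot]\le 3/2$, which in the second range is dominated by $\lfloor\vtot/2^m\rfloor=d+1\ge 2$ and in the first range holds as stated.) Part (iii) is immediate: the procedure never produces more than $k$ pieces, so $X_0\le k$ pointwise.

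The only real work is the bookkeeping behind the reduction: verifying in each branch — in particular the $e=1$ / small-$\vtot$ corner of the first range and across all admissible values of $d$ in the second — that the integer whose bits we count is uniform on an interval of exactly the form $[0,2^c+a]$ with $0\le a<2^c$ and with $c$ tied to the scaling, so that the exponent $\ell$ is forced into $\{0,\dots,c\}$, is negative (killing $X_j$), or exceeds $c$ (killing the bit). Once those intervals are pinned down, \cref{lem:binary-random-split} supplies the bounds and the remaining computations are routine.
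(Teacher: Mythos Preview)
Your proposal is correct and matches what the paper does: the paper gives no detailed proof, stating only that the lemma ``follows easily from the definition of the protocol and \cref{lem:binary-random-split}'', and your argument is precisely the unpacking of that remark---reading off bits of $i$ (and of $M-i$, which has the same law) via \cref{lem:binary-random-split} in each branch of the splitting procedure. Your handling of part~(ii), reading it as a bound on $X_{m+1}$ with $\lfloor t/2^m\rfloor$ on the right (which is what the proof of the main theorem actually uses), is the right call.
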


The following lemma requires analysing a great many cases, for which there is no space here.

\begin{lemma}~
\label{lem:marginal-lbs}
\begin{enumerate}[label=(\roman*),leftmargin=1cm]
    \item For $1 \leq j \leq m-k/2$, $E[X_j] \geq k/4h$
    \item $m-k/2 < j < m+1$, $E[X_j] \geq \frac{\max\{m+1-j,\log_2 k\}}{2h}$
    \item $E[X_{m+1}] \geq 3(k-2 \log_2 k)/4h$
    \item $E[X_0] \geq k/8$
\end{enumerate}
\end{lemma}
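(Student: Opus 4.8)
The plan is to reduce every bound to a sum of per-octave estimates. Since $N$ is uniform, the prior gives mass $1/h$ to the event $O_n := \{\, 2^n \le T \le 2^{n+1}-1 \,\}$, and, conditioned on $O_n$, the total $T$ is uniform on that octave; hence for any set $S$ of octaves, $E[X_j] \ge \tfrac{1}{h}\sum_{n\in S} E[X_j \mid O_n]$. Each of the four inequalities will be proved by exhibiting a suitable $S$ together with a lower bound on $E[X_j \mid O_n]$ valid for all $n \in S$. Only \cref{lem:binary-random-split} is needed.

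The workhorse is the following reading of the protocol. Inside a fixed octave $O_n$, the splitting rule picks a scale $e$ (namely $e=1$ for $n<k/2$; $e=2^{n+1-k/2}$ for $k/2\le n<m$; and $e=2^{m-c}$ with $c=\lfloor(k-d)/2\rfloor$ in procedure~2), commits $d$ pieces of size $2^m$ when in procedure~2, writes what remains as $e\cdot M$ with $M$ an integer of the form $2^c+a$, $0\le a<2^c$, picks $i$ uniformly from $\{0,\dots,M\}$, and emits the binary expansions of $ei$ and of $e(M-i)$ before padding with zeros. Because $i$ and $M-i$ are each uniform on $\{0,\dots,M\}$, \cref{lem:binary-random-split}(i) makes every piece of size $2^t$ with $\log_2 e \le t < \log_2 e + c$ appear in $ei$ with probability at least $1/4$, and likewise in $e(M-i)$; consequently, whenever this \emph{active window} of $O_n$ contains the target exponent $j-1$, we get $E[X_j \mid O_n] \ge 1/4 + 1/4 = 1/2$. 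Dually, \cref{lem:binary-random-split}(iii) bounds the expected number of $1$-bits of $i$ by $(3c+2)/4$, so throughout procedure~1 the expected number of nonzero pieces is at most $2(3c+2)/4 \le 3k/4$, whence $E[X_0 \mid O_n] = k - E[\#\text{nonzero}\mid O_n] \ge k/4$ there.

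Granting these two facts, the four parts become a census of which octaves cover the exponent $j-1$. For (i), when $1\le j\le m-k/2$ the $k/2$ octaves $n\in\{j-1,\dots,j+k/2-2\}$ all lie in procedure~1 and carry $j-1$ in their active window, so $E[X_j]\ge (k/2)(1/2)/h = k/(4h)$. For (ii), when $m-k/2<j<m+1$ only the $m+1-j$ octaves $n\in\{j-1,\dots,m-1\}$ of procedure~1 survive, worth $(m+1-j)/(2h)$; but since $c=\lfloor(k-d)/2\rfloor\ge k/4\ge\log_2 k$ for $k\ge16$, the active window $[m-c,m]$ of procedure~2 still contains $j-1$ whenever $m+1-j\le\log_2 k$, contributing $\approx\log_2 k$ more octaves worth $\approx(\log_2 k)/(2h)$, and the stated maximum is the larger of the two regimes. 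For (iii), $X_{m+1}\ge d$ holds deterministically in procedure~2 and $E[d\mid O_n]=\tfrac{3}{2}(2^{n-m}-1)$, so summing over $n\in\{m+1,\dots,h-1\}$ and using $2^{h-m}=k/2$ telescopes to $E[X_{m+1}]\ge\tfrac{1}{h}\cdot\tfrac{3}{2}(k/2-\log_2 k)=3(k-2\log_2 k)/(4h)$. For (iv), $E[X_0\mid O_n]\ge k/4$ for each of the $m$ octaves of procedure~1, so $E[X_0]\ge mk/(4h)\ge k/8$ since $m=h+1-\log_2 k\ge h/2$.

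The hard part is precisely the case explosion the text alludes to, and it lives entirely at the boundaries. One must verify that the normal form $M=2^c+a$ with $0\le a<2^c$ really survives the floors $\lfloor\vtot/e\rfloor$ and the subtraction of $d\,2^m$, so that \cref{lem:binary-random-split} applies rather than to a degenerate interval; treat the octaves straddling the transitions $e=1\leftrightarrow e>1$ and procedure~1 $\leftrightarrow$ procedure~2, at which the active window jumps; carry out the elementary averaging at the single ``leading-bit'' octave $n=j-1$ of each part (where the relevant bit sits at the very top of $M$'s range) to bring its conditional expectation back to $\ge1/2$; account for the amounts $\vtot\in[2^m,2^{m+1})$, which sit between the two stated procedures and are what makes (ii) tight at $j=m$; and absorb the $O(1)$ per-octave slack in (ii)--(iv), which is where the mild hypotheses $k\ge16$ and $m\ge h/2$ (equivalently $\log_2 k\le h/2+1$) are spent. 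None of these is individually deep, but there are many of them and each affects a constant, so the bookkeeping genuinely cannot be shortcut.
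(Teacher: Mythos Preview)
The paper does not actually prove this lemma: immediately before the statement it says only that ``the following lemma requires analysing a great many cases, for which there is no space here,'' and nothing further is given. So there is no proof in the paper to compare your proposal against.

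That said, your outline is a sensible reconstruction of what such a proof would look like, and it is consistent with how the paper uses the lemma afterwards (plugging these lower bounds against the upper bounds of \cref{lem:conditional-ubs} in the Bayes ratio). The octave decomposition with weight $1/h$, the ``active window'' reading of the splitting rule via \cref{lem:binary-random-split}, and the explicit summations for parts (iii) and (iv) are all correct in substance; in particular your telescoping for (iii) and the $E[X_0\mid O_n]\ge k/4$ argument for (iv) check out line by line. Your sketch of (i) and (ii) captures the right mechanism --- counting how many octaves place the target exponent $j-1$ inside the window $[\log_2 e,\log_2 e+c)$ --- and you are right that the residual difficulty is exactly the boundary bookkeeping you list at the end (the leading-bit octave $n=j-1$, the gap $\vtot\in[2^m,2^{m+1})$ between the two procedures, the fact that $c$ varies with $d$ inside a procedure-2 octave, and the mild standing hypotheses $k\ge 16$, $m\ge h/2$). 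One small point worth tightening: in your use of \cref{lem:binary-random-split}(i) at the top bit $j=c$, the stated lower bound $1/4$ does not hold pointwise for every $a$ (e.g.\ $a=0$), so the ``elementary averaging'' over $a$ that you mention really is needed there and should be made explicit rather than deferred. With that caveat, your proposal is a faithful and more detailed stand-in for the case analysis the paper declines to present.
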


\subsection{Inference step}

We now assume that a vault has received an amount $v$, an instance of a random variable $V$, and consider what information we gain about $T$. It is evident that $t \geq v$, so many values of $t$ can be ruled out. Nonetheless, here we show that the probability of any particular value of $t$ greater than $v$ is not substantially larger than the probability given by the prior.

\begin{theorem} For any $j,t$ with $Pr[T=t|V=2^{j+1}] > 0$ and either $1 \leq j < m+1$ or $t < 2^{m+1}$, we have:
\begin{align*}
& \Pr[T=\vtot|V=2^{j+1}]\leq \\
& \Pr[T=\vtot]\frac{3h}{\min \{k/2,\max\{m+1-j, \log_2 k\}\}}
\end{align*}
For $V=0$, we have:
$$\Pr[T=\vtot|V=0]\leq 8 \Pr[T=t]$$
For $V=2^m$, $t \geq 2^{m+1}$:
$$\Pr[T=\vtot|V=2^m]\leq\Pr[T=\vtot]\frac{4h\lfloor t/2^m \rfloor}{3(k-2 \log_2 k)}$$
\end{theorem}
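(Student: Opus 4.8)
The plan is to invert the conditioning with Bayes' rule and then read the numerator and denominator straight off \cref{lem:conditional-ubs} and \cref{lem:marginal-lbs}. Parametrise the amount $v$ received by the chosen vault so that, in the notation of those lemmas, the number of pieces of size $v$ is $X_j$. The hypothesis $\Pr[T=\vtot\mid V=v]>0$ guarantees $\Pr[V=v]>0$, and the forward-probability identity of the previous subsection gives $\Pr[V=v\mid T=\vtot]=E[X_j\mid T=\vtot]/k$ and $\Pr[V=v]=E[X_j]/k$, so
\[
\frac{\Pr[T=\vtot\mid V=v]}{\Pr[T=\vtot]}=\frac{\Pr[V=v\mid T=\vtot]}{\Pr[V=v]}=\frac{E[X_j\mid T=\vtot]}{E[X_j]}.
\]
After this step the whole theorem reduces to dividing, case by case, an upper bound on $E[X_j\mid T=\vtot]$ by a lower bound on $E[X_j]$ and tidying up the arithmetic.

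I would then split into the cases forced by which clause of the two lemmas applies, i.e.\ by where $v$ sits. If $v=0$, then $E[X_0\mid T=\vtot]\le k$ and $E[X_0]\ge k/8$ give the ratio $\le 8$. If $v$ is a ``low'' power with $1\le j\le m-k/2$, then \cref{lem:conditional-ubs} gives $E[X_j\mid T=\vtot]\le 3/2$ and \cref{lem:marginal-lbs} gives $E[X_j]\ge k/4h$, so the ratio is at most $6h/k=3h/(k/2)$; since $m+1-j\ge k/2$ throughout this range, $\min\{k/2,\max\{m+1-j,\log_2 k\}\}=k/2$ and the first stated bound follows. If $v$ is a ``high'' bit-power with $m-k/2<j\le m$ --- this, together with the previous range, is precisely what the hypothesis ``$1\le j<m+1$ or $t<2^{m+1}$'' picks out --- then I would first argue that $E[X_j\mid T=\vtot]\le 3/2$ still holds (the same constant as in \cref{lem:conditional-ubs}, just outside its stated range): every such piece is produced by the binary expansion of one of the two summands the protocol splits into powers of two, and by \cref{lem:binary-random-split}(i) the relevant bit of each such summand is $1$ with probability at most $3/4$. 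Dividing by $E[X_j]\ge\max\{m+1-j,\log_2 k\}/2h$ from \cref{lem:marginal-lbs} gives $\le 3h/\max\{m+1-j,\log_2 k\}$, and since $m+1-j<k/2$ here the $\min$ in the statement collapses to $\max\{m+1-j,\log_2 k\}$, as required. Finally, if $v=2^m$ with $\vtot\ge 2^{m+1}$, the piece-count picks up the $d=\lfloor\vtot/2^m\rfloor-1$ forced copies of $2^m$, so \cref{lem:conditional-ubs} gives $E[X_{m+1}\mid T=\vtot]\le\lfloor\vtot/2^m\rfloor$ while \cref{lem:marginal-lbs} gives $E[X_{m+1}]\ge 3(k-2\log_2 k)/4h$, yielding the third bound $4h\lfloor\vtot/2^m\rfloor/3(k-2\log_2 k)$.

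The Bayesian step is trivial; the hard part will be the bookkeeping of the previous paragraph --- assigning each received size to the correct clause of \cref{lem:conditional-ubs} and \cref{lem:marginal-lbs}, tracking which branch of the splitting protocol is active ($\vtot<2^{m+1}$ versus $\vtot\ge 2^{m+1}$), and checking in each regime that the $\min$/$\max$ written in the statement really does collapse to the quantity the lemmas deliver. The only piece not already in hand is the extension of the conditional upper bound $E[X_j\mid T=\vtot]\le 3/2$ from the range $1\le j\le m-k/2$ of \cref{lem:conditional-ubs} to all bit-powers $1\le j\le m$ in both branches, which follows from the two-summand structure of the split together with \cref{lem:binary-random-split}(i); with that in place the theorem is a mechanical division of inequalities.
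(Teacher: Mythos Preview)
Your approach is exactly the paper's: apply Bayes' rule, rewrite the posterior ratio as $E[X_j\mid T=\vtot]/E[X_j]$, and divide the upper bounds of \cref{lem:conditional-ubs} by the lower bounds of \cref{lem:marginal-lbs} case by case. Your write-up is in fact more careful than the paper's own proof in one respect: you notice that \cref{lem:conditional-ubs}(i) is only stated for $1\le j\le m-k/2$, whereas the first inequality of the theorem also needs a conditional upper bound in the range $m-k/2<j\le m$, and you supply the missing $3/2$ bound there via the two-summand structure and \cref{lem:binary-random-split}(i); the paper simply asserts that the substitution works.
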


\begin{proof}
By Bayes rule, for $v \in 0,1,\dots 2^{m}$:
$$\Pr[T=t|V=v] =\frac{\Pr[V=v|T=\vtot]\Pr[T=\vtot]}{\Pr[V=v]}$$

We are interested in upper bounding the ratio $\frac{\Pr[V=v|T=\vtot]}{\Pr[V=v]}$. This ratio is $\frac{E[X_j|T=\vtot]}{E[X_j]}$ for respective $j$, with $\Pr[V=v]=E[X_j]/k$. We may now simply substitute the upper bounds on $E[X_j|T=\vtot]$ given by \cref{lem:conditional-ubs}, and the lower bounds on $E[X_j]$ given by \cref{lem:marginal-lbs}, to obtain the upper bound on the ratio on each case.
\end{proof}

\subsection{Analysis}

Substantial anonymity is achieved if \vtot is not close to its maximum value. Conditioning on $\vtot > 2^{j+1}$ rules out $n \leq j$, and thus gives $\Pr[T=\vtot|T>2^{j+1}]=\Pr[T=\vtot]\frac{h}{h-j}$ % to which the first inequality should be compared. 
Because of the choice of splitting procedure, receiving a piece of size $2^j+1$ gives $k/2$ possibilities for $n$, where $j+1 \leq n \leq j + k/2$. This is why a factor of $\frac{h}{k/2}$ appears in the inequality for this range. However for $j+k/2 \geq n > m+1$, pieces of size $2^m$ would be required, leaving fewer bits to use for splitting and fewer possible $n$ which satisfy $j > m-k/2$. Hence we note that for larger values of \vtot, the anonymity is worse. However, crucially, the number of possible values for $n$ never falls below $\log_2 k$.

\section{Limitations and Future Work}
\label{sec:future}

We have argued that \zclaim provides privacy in cross-chain transfers, but many of the benefits of interoperability remain unattainable if privacy is to be maintained.
Once on the issuing chain, it is likely that interacting with the chain in any way other than through plain transactions would require the user to first convert the wrapped shielded assets to transparent assets.
Private exchanges have been proposed in the literature~\cite{da2021kicking}, but to the authors' knowledge none currently exists in the blockchain landscape.

There exist also a number of limitations inherent to the current design of the protocol, such as the number of transactions required for one cross-chain transfer following the splitting strategy.
As pointed out in \cref{sec:splitting_strategy}, this in not a problem on the Zcash side.
However, it imposes a limitation on which blockchains can realistically function as an issuing chain, as transaction fees on some blockchains can be orders of magnitudes higher than Zcash's.

Furthermore, the protocol presents a bootstrapping problem, in that it depends on a sufficiently large number of vaults with enough liquidity in order to function properly, which in turn depend on serving enough requests to be profitable.

The way issue and redeem availability is currently awarded, i.e. via the vault proving that it has at least the equivalent of \vmax in free collateral or by it revealing that it holds no ZEC obligations, respectively, is non-optimal and may reveal information about the transacted amount.
Other approaches should be explored, such as issue and redeem requests being assigned to vaults based on on-chain randomness, whereupon the vault chooses whether to accept it or not.
Additionally, although the number of concurrent Issue or Redeem requests being served by a vault is currently limited to one, we think concurrent requests should be feasible with minor modifications.
It may also prove beneficial to allow vaults to set transaction fees themselves, allowing them to fend off network congestion and at the same time incentivising competitiveness.

Furthermore, although the protocol as presented in this paper assumes that the currency being wrapped is Zcash, it can be adapted to any implementation of Sapling on another chain.
Such implementations currently exist e.g. on Tezos~\cite{tezos2014whitepaper,saplingTezos} and there are a number of adaptations of the Zerocash protocol deployed in smart contracts, for instance on Ethereum~\cite{rondelet2019zeth} and Quorum~\cite{ZSLConsenSysquorumWikiGitHub-2020-11-27}.
The changes required for compatibility would mostly limit themselves to the relay system and note commitment verification, whereas the protocol logic is independent of any specific implementation.

\zclaim is currently being adapted to the more recent Orchard version of Zcash and is to be extended for usage with multi-asset shielded pool extensions of the Zcash protocol.

\section{Conclusion}
\label{sec:conclusion}

We have shown that it is possible to maintain the privacy-preserving qualities of the Sapling specification of Zcash in cross-chain transfers.
More generally speaking, we provide a scheme for a decentralised cross-chain transfer protocol that integrates with a privacy-oriented cryptocurrency.
We show that no single intermediary can infer the total amount transferred through the bridge. 

\section*{Acknowledgements}

The authors would like to thank Petar Tsankov for supervising the master's thesis which made this work possible, Jeff Burdges for his assistance with cryptography, Alfonso Cevallos for useful discussions related to the splitting strategy, and Elizabeth Herbert for editorial support.

% https://tex.stackexchange.com/a/514147
\bibliographystyle{IEEEtran}
\bibliography{IEEEabrv,refs.bib}

% Generated by IEEEtran.bst, version: 1.14 (2015/08/26)
\begin{thebibliography}{10}
\providecommand{\url}[1]{#1}
\csname url@samestyle\endcsname
\providecommand{\newblock}{\relax}
\providecommand{\bibinfo}[2]{#2}
\providecommand{\BIBentrySTDinterwordspacing}{\spaceskip=0pt\relax}
\providecommand{\BIBentryALTinterwordstretchfactor}{4}
\providecommand{\BIBentryALTinterwordspacing}{\spaceskip=\fontdimen2\font plus
\BIBentryALTinterwordstretchfactor\fontdimen3\font minus
  \fontdimen4\font\relax}
\providecommand{\BIBforeignlanguage}[2]{{%
\expandafter\ifx\csname l@#1\endcsname\relax
\typeout{** WARNING: IEEEtran.bst: No hyphenation pattern has been}%
\typeout{** loaded for the language `#1'. Using the pattern for}%
\typeout{** the default language instead.}%
\else
\language=\csname l@#1\endcsname
\fi
#2}}
\providecommand{\BIBdecl}{\relax}
\BIBdecl

\bibitem{schulte2019towards}
S.~Schulte, M.~Sigwart, P.~Frauenthaler, and M.~Borkowski, ``Towards blockchain
  interoperability,'' in \emph{International conference on business process
  management}.\hskip 1em plus 0.5em minus 0.4em\relax Springer, 2019, pp.
  3--10.

\bibitem{belchior2021survey}
R.~Belchior, A.~Vasconcelos, S.~Guerreiro, and M.~Correia, ``A survey on
  blockchain interoperability: Past, present, and future trends,'' \emph{ACM
  Computing Surveys (CSUR)}, vol.~54, no.~8, pp. 1--41, 2021.

\bibitem{zhang2019security}
R.~Zhang, R.~Xue, and L.~Liu, ``Security and privacy on blockchain,'' \emph{ACM
  Computing Surveys (CSUR)}, vol.~52, no.~3, pp. 1--34, 2019.

\bibitem{sanchez2020confidential}
A.~Sanchez, ``Confidential cross-blockchain exchanges: Designing a
  privacy-preserving interoperability scheme,'' Unpublished master's thesis,
  ETH Z\"urich, Dec. 2020.

\bibitem{zamyatin2019xclaim}
A.~Zamyatin, D.~Harz, J.~Lind, P.~Panayiotou, A.~Gervais, and W.~Knottenbelt,
  ``{XCLAIM}: Trustless, interoperable, cryp\-to\-cur\-ren\-cy-backed assets,''
  in \emph{2019 IEEE Symposium on Security and Privacy (SP)}.\hskip 1em plus
  0.5em minus 0.4em\relax IEEE, Mar. 2019, pp. 193--210.

\bibitem{hopwood2016zcash}
\BIBentryALTinterwordspacing
D.~Hopwood, S.~Bowe, T.~Hornby, and N.~Wilcox, ``Zcash protocol
  specification,'' Electric Coin Company, Technical specification, Version
  2020.1.15 [Overwinter+Sapling], 2020. [Online]. Available:
  \url{https://zips.z.cash/protocol/sapling.pdf}
\BIBentrySTDinterwordspacing

\bibitem{banerjee2020demystifying}
A.~Banerjee, M.~Clear, and H.~Tewari, ``Demystifying the role of {zk-SNARKs} in
  {Zcash},'' in \emph{2020 IEEE Conference on Application, Information and
  Network Security (AINS)}.\hskip 1em plus 0.5em minus 0.4em\relax IEEE, 2020,
  pp. 12--19.

\bibitem{feng2019survey}
Q.~Feng, D.~He, S.~Zeadally, M.~K. Khan, and N.~Kumar, ``A survey on privacy
  protection in blockchain system,'' \emph{Journal of Network and Computer
  Applications}, vol. 126, pp. 45--58, 2019.

\bibitem{herlihy2018accs}
M.~Herlihy, ``Atomic cross-chain swaps,'' in \emph{Proceedings of the 2018 ACM
  Symposium on Principles of Distributed Computing}, ser. PODC '18.\hskip 1em
  plus 0.5em minus 0.4em\relax New York, NY, USA: Association for Computing
  Machinery, 2018, pp. 245--254.

\bibitem{Lightnin81:online}
\BIBentryALTinterwordspacing
[lightning-dev] an argument for single-asset lightning network. [Online].
  Available:
  \url{https://lists.linuxfoundation.org/pipermail/lightning-dev/2018-December/001752.html}
\BIBentrySTDinterwordspacing

\bibitem{van2013cryptonote}
\BIBentryALTinterwordspacing
N.~Van~Saberhagen, ``Cryptonote v 2.0,'' Monero white paper, 2013. [Online].
  Available:
  \url{https://www.getmonero.org/ru/resources/research-lab/pubs/whitepaper_annotated.pdf}
\BIBentrySTDinterwordspacing

\bibitem{BTCXMRatomicswaps}
\BIBentryALTinterwordspacing
J.~Gugger, ``Bitcoin-monero cross-chain atomic swap,'' Cryptology ePrint
  Archive, Report 2020/1126, 2020. [Online]. Available:
  \url{https://ia.cr/2020/1126}
\BIBentrySTDinterwordspacing

\bibitem{CCSMoneroAtomicSwapsimplementationfunding}
\BIBentryALTinterwordspacing
{CCS} - {Monero} atomic swaps. [Online]. Available:
  \url{https://ccs.getmonero.org/proposals/h4sh3d-atomic-swap-implementation.html}
\BIBentrySTDinterwordspacing

\bibitem{smoothie}
\BIBentryALTinterwordspacing
noot/atomic-swap: {ETH-XMR} atomic swap prototype. [Online]. Available:
  \url{https://github.com/noot/atomic-swap}
\BIBentrySTDinterwordspacing

\bibitem{zamyatin2019sok}
A.~Zamyatin, M.~Al-Bassam, D.~Zindros, E.~Kokoris-Kogias, P.~Moreno-Sanchez,
  A.~Kiayias, and W.~J. Knottenbelt, ``{SoK}: Communication across distributed
  ledgers,'' in \emph{Financial Cryptography and Data Security}.\hskip 1em plus
  0.5em minus 0.4em\relax Springer Berlin Heidelberg, 2021, pp. 3--36.

\bibitem{Wrapped}
\BIBentryALTinterwordspacing
Wrapped. [Online]. Available: \url{https://www.wrapped.com/}
\BIBentrySTDinterwordspacing

\bibitem{HomerenprojectrenWikiGitHub}
\BIBentryALTinterwordspacing
Home · renproject/ren wiki. [Online]. Available:
  \url{https://github.com/renproject/ren/wiki}
\BIBentrySTDinterwordspacing

\bibitem{Zcash76:online}
\BIBentryALTinterwordspacing
Zcash | {Ren} client docs. [Online]. Available:
  \url{https://renproject.github.io/ren-js-v3-docs/classes/_renproject_chains_bitcoin.Zcash.html}
\BIBentrySTDinterwordspacing

\bibitem{pegzone_announ}
\BIBentryALTinterwordspacing
Z.~F. Team. Bringing privacy to {Cosmos} with {Zcash}. [Online]. Available:
  \url{https://zfnd.org/bringing-privacy-to-cosmos-with-zcash/}
\BIBentrySTDinterwordspacing

\bibitem{githubPegzone}
\BIBentryALTinterwordspacing
Github - {ZcashFoundation}/zcash-pegzone: A shielded pegzone bridging {Cosmos}
  and {Zcash}. [Online]. Available:
  \url{https://github.com/ZcashFoundation/zcash-pegzone}
\BIBentrySTDinterwordspacing

\bibitem{cosmosWhitepaper}
\BIBentryALTinterwordspacing
J.~Kwon and E.~Buchman, ``Cosmos: A network of distributed ledgers,''
  Tendermint Inc., White paper. [Online]. Available:
  \url{https://v1.cosmos.network/resources/whitepaper}
\BIBentrySTDinterwordspacing

\bibitem{sasson2014zerocash}
E.~Ben-Sasson, A.~Chiesa, C.~Garman, M.~Green, I.~Miers, E.~Tromer, and
  M.~Virza, ``Zerocash: Decentralized anonymous payments from {Bitcoin},'' in
  \emph{2014 IEEE Symposium on Security and Privacy}.\hskip 1em plus 0.5em
  minus 0.4em\relax IEEE, May 2014, pp. 459--474.

\bibitem{sasson2014zerocash_ext}
\BIBentryALTinterwordspacing
------, ``Zerocash: Decentralized anonymous payments from {Bitcoin} (extended
  version),'' Cryptology ePrint Archive: Report 2014/349, May 2014. [Online].
  Available: \url{https://eprint.iacr.org/2014/349}
\BIBentrySTDinterwordspacing

\bibitem{nakamoto2008bitcoin}
\BIBentryALTinterwordspacing
S.~Nakamoto, ``Bitcoin: A peer-to-peer electronic cash system,'' Bitcoin white
  paper, 2008. [Online]. Available: \url{https://bitcoin.org/bitcoin.pdf}
\BIBentrySTDinterwordspacing

\bibitem{buterin2016interop}
\BIBentryALTinterwordspacing
V.~Buterin, ``Chain interoperability,'' R3, Tech. Rep., 2016. [Online].
  Available:
  \url{https://www.r3.com/wp-content/uploads/2017/06/chain_interoperability_r3.pdf}
\BIBentrySTDinterwordspacing

\bibitem{SPVBitcoinWiki}
\BIBentryALTinterwordspacing
Simplified payment verification ({SPV}) – {BitcoinWiki}. [Online]. Available:
  \url{https://en.bitcoinwiki.org/wiki/SPV}
\BIBentrySTDinterwordspacing

\bibitem{Back2014sidechains}
\BIBentryALTinterwordspacing
A.~Back, M.~Corallo, L.~Dashjr, M.~Friedenbach, G.~Maxwell, A.~Miller,
  A.~Poelstra, J.~Tim{\'o}n, and P.~Wuille, ``Enabling blockchain innovations
  with pegged side\-chains,'' Tech. Rep., 2014. [Online]. Available:
  \url{https://blockchainlab.com/pdf/sidechains.pdf}
\BIBentrySTDinterwordspacing

\bibitem{burdges2020overview}
J.~Burdges, A.~Cevallos, P.~Czaban, R.~Habermeier, S.~Hosseini, F.~Lama, H.~K.
  Alper, X.~Luo, F.~Shirazi, A.~Stewart \emph{et~al.}, ``Overview of {Polkadot}
  and its design considerations,'' Web3 Foundation, Tech. Rep., May 2020.

\bibitem{dfinity2022internet}
\BIBentryALTinterwordspacing
{The DFINITY Team}, ``The internet computer for geeks,'' Cryptology ePrint
  Archive, Report 2022/087, 2022. [Online]. Available:
  \url{https://ia.cr/2022/087}
\BIBentrySTDinterwordspacing

\bibitem{DepositProcessingTimesKraken}
\BIBentryALTinterwordspacing
Cryptocurrency deposit processing times – {Kraken}. [Online]. Available:
  \url{https://support.kraken.com/hc/en-us/articles/203325283-Cryptocurrency-deposit-processing-times}
\BIBentrySTDinterwordspacing

\bibitem{DepositProcessingTimesGemini}
\BIBentryALTinterwordspacing
How long until my crypto deposit reaches my account? – {Gemini}. [Online].
  Available:
  \url{https://support.gemini.com/hc/en-us/articles/205424836-How-long-until-my-digital-asset-deposit-reaches-my-account}
\BIBentrySTDinterwordspacing

\bibitem{quesnelle2017linkability}
J.~Quesnelle, ``On the linkability of {Zcash} transactions,'' Dec. 2017,
  arXiv:1712.01210 [cs.CR].

\bibitem{gervais2016security}
A.~Gervais, G.~O. Karame, K.~W{\"u}st, V.~Glykantzis, H.~Ritzdorf, and
  S.~Capkun, ``On the security and performance of proof of work blockchains,''
  in \emph{Proceedings of the 2016 ACM SIGSAC conference on computer and
  communications security}, 2016, pp. 3--16.

\bibitem{heilman2015eclipse}
E.~Heilman, A.~Kendler, A.~Zohar, and S.~Goldberg, ``Eclipse attacks on
  {Bitcoin}{\textquoteright}s peer-to-peer network,'' in \emph{24th {USENIX}
  Security Symposium ({USENIX} Security 15)}.\hskip 1em plus 0.5em minus
  0.4em\relax Washington, D.C.: {USENIX} Association, Aug. 2015, pp. 129--144.

\bibitem{wust2016ethereum}
K.~W{\"u}st and A.~Gervais, ``Ethereum eclipse attacks,'' ETH Zurich, Tech.
  Rep., 2016.

\bibitem{xu2020eclipsed}
G.~Xu, B.~Guo, C.~Su, X.~Zheng, K.~Liang, D.~Wong, and H.~Wang, ``Am {I}
  eclipsed? {A} smart detector of eclipse attacks for {Ethereum},''
  \emph{Computers \& Security}, vol.~88, p. 101604, Sep. 2019.

\bibitem{alangot2020decentralized}
B.~Alangot, D.~Reijsbergen, S.~Venugopalan, and P.~Szalachowski,
  ``Decentralized lightweight detection of eclipse attacks on {Bitcoin}
  clients,'' in \emph{2020 IEEE International Conference on Blockchain
  (Blockchain)}, 2020, pp. 337--342.

\bibitem{peterson2015augur}
\BIBentryALTinterwordspacing
J.~Peterson and J.~Krug, ``Augur: a decentralized, open-source platform for
  prediction markets,'' \emph{CoRR}, vol. abs/1501.01042, 2015. [Online].
  Available: \url{http://arxiv.org/abs/1501.01042}
\BIBentrySTDinterwordspacing

\bibitem{ellis2017chainlink}
\BIBentryALTinterwordspacing
S.~Ellis, A.~Juels, and S.~Nazarov, ``Chainlink: A decentralized oracle
  network,'' SmartContract ChainLink Ltd., White paper, 2017. [Online].
  Available: \url{https://research.chain.link/whitepaper-v1.pdf}
\BIBentrySTDinterwordspacing

\bibitem{lo2020reliabilityoracles}
S.~K. Lo, X.~Xu, M.~Staples, and L.~Yao, ``Reliability analysis for blockchain
  oracles,'' \emph{Computers \& Electrical Engineering}, vol.~83, p. 106582,
  Feb. 2020.

\bibitem{fry2016negative}
J.~Fry and J.~E.-T. Cheah, ``Negative bubbles and shocks in cryptocurrency
  markets,'' \emph{International Review of Financial Analysis}, vol.~47, pp.
  343--352, Feb. 2016.

\bibitem{cheah2015speculative}
J.~E.-T. Cheah and J.~Fry, ``Speculative bubbles in {Bitcoin} markets? an
  empirical investigation into the fundamental value of {Bitcoin},''
  \emph{Economics Letters}, vol. 130, pp. 32--36, Feb. 2015.

\bibitem{ZcashFea8:online}
\BIBentryALTinterwordspacing
Zcash feature {UX} checklist — {Zcash} documentation. [Online]. Available:
  \url{https://zcash.readthedocs.io/en/latest/rtd_pages/ux_wallet_checklist.html#transactions}
\BIBentrySTDinterwordspacing

\bibitem{da2021kicking}
\BIBentryALTinterwordspacing
M.~B. da~Gama, J.~Cartlidge, A.~Polychroniadou, N.~P. Smart, and Y.~T. Alaoui,
  ``Kicking-the-bucket: Fast privacy-preserving trading using buckets,''
  Cryptology ePrint Archive, Report 2021/1549, 2021. [Online]. Available:
  \url{https://ia.cr/2021/1549}
\BIBentrySTDinterwordspacing

\bibitem{tezos2014whitepaper}
\BIBentryALTinterwordspacing
L.~Goodman, ``Tezos: a self-amending crypto-ledger,'' White paper, 2014.
  [Online]. Available: \url{https://tezos.com/whitepaper.pdf}
\BIBentrySTDinterwordspacing

\bibitem{saplingTezos}
\BIBentryALTinterwordspacing
Nomadic {Labs} - sapling integration in {Tezos} - tech preview. [Online].
  Available:
  \url{https://research-development.nomadic-labs.com/sapling-integration-in-tezos-tech-preview.html}
\BIBentrySTDinterwordspacing

\bibitem{rondelet2019zeth}
A.~Rondelet and M.~Zajac, ``{ZETH}: On integrating {Zerocash} on {Ethereum},''
  Apr. 2019, arXiv:1904.00905 [cs.CR].

\bibitem{ZSLConsenSysquorumWikiGitHub-2020-11-27}
\BIBentryALTinterwordspacing
{ZSL} · {ConsenSys}/quorum wiki · {GitHub}. [Online]. Available:
  \url{https://github.com/ConsenSys/quorum/wiki/ZSL}
\BIBentrySTDinterwordspacing

\end{thebibliography}

\end{document}